\newtheorem{proposition}{Proposition}
\let\cite\parencite
\title{Local search for valued constraint satisfaction parameterized by treedepth}
\author{Artem Kaznatcheev}
\begin{document}

\maketitle

\begin{abstract}
Sometimes local search algorithms cannot efficiently find even local peaks.
To understand why, I look at the structure of ascents in fitness landscapes from valued constraint satisfaction problems (VCSPs) parameterized by the treedepth of their constraint graphs.
There are existing constructions of VCSPs with logarithm treedepth that represent fitness landscapes where all ascents are exponential from some initial assignment.
I improve these bounds by showing that with loglog treedepth, superpolynomial ascents exist; and for polylog treedepth, there are initial assignments from which all ascents are superpolynomial.
My hope is that these examples of sparse VCSPs can help us better understand the barriers to efficient local search.
\end{abstract}

\section{Introduction}

\textcite{repCP} showed that no long ascents exist in fitness landscapes corresponding to binary Boolean valued constraint satisfaction problems (VCSPs) with constraint graphs that are trees.
However, they also showed that moving to three-valued VCSPs or increasing the constraint graph to pathwidth 2, allows the existence of exponentially long ascents in the corresponding fitness landscapes.
Their construction of this pathwidth 2 example was a slight simplification of a max degree 4 construction of MAX-CUT by \textcite{MaxCutDeg4}.
\textcite{tw7} showed that not only do such long ascents exist, but that for pathwidth 7 constraint graphs, even reasonable local search dynamics like greedy follow a steepest ascent of exponential length to a local fitness peak.
\textcite{pw4} subsequently reduced the parameter for an exponential steepest ascents to just pathwidth 4.
Both sets of authors were unaware of \textcite{HL_Steepest}'s much earlier construction that had pathwidth 3 (as noted by \textcite{effective-and-efficient}).
Recently, \textcite{pw2MSc} and \textcite{slow-greed} reduced the `simplicity' of construction of VCSPs that produce exponential steepest ascents to the lowest possible value of pathwidth 2.

The above results seem to stand in contrast to our knowledge of solving for the global-optimum of VCSPs using non-local-search methods.
In particular, the global optimum of a VCSPs with a bounded treewidth constraint graph can be found in polynomial time~\cite{BB73,VCSPsurvey}.
In an earlier 2024 version of this paper~\cite{td_short}, I wanted to address this discrepancy.
But I made a number of unfixable errors in one of the central proofs.
In particular, I claimed to show that in a fitness landscape corresponding to a bounded treewidth VCSP, there always exists a polynomial length assent to some local optimum.
I claimed to do this by showing the stronger result that VCSPs with logarithmic treedepth always have a polynomial length ascent to a local optimum.
My claims were false.
In fact, these result could not have been true, and I could have known this from better study of existing constructions.

Specifically, there were existing constructions where all ascents are exponential from some initial assignment that have bounded pathwidth and thus also log-bounded treedepth. 
Although \textcite{MaxCutDeg4} was focused on reducing the maximum degree of Max-Cut instances, their construction for some ascent exponential had pathwidth 2.
And they claimed that their some-ascent-exponential construction can be converted to an all-ascents-exponential construction by growing the gadgets that they used from five variables to twenty-eight.
In 2024, \textcite{MaxCutSimple} provided a simpler Max-Cut construction with maximum degree four and all ascents exponential.
Neither of these two Max-Cut papers were focused on the pathwidth of their constructions.
But given that the constructions were a chain of finite gadgets, it is clear that they had finite pathwidth and this is enough to show my old claim to be impossible.
Finally, in 2026, \textcite{VCSPpw3_allExp} showed that both of the prior Max-Cut constructions had pathwidth $\geq 4$, and provided a ternary Boolean VCSP constructions that reduces the pathwidth to three while having all ascents exponential from some designated initial assignment.
From these constructions, we know that it is not possible for a short ascent to exist from all initial assignments in fitness landscapes represented by VCSPs of constant treewidth or logarithmic treedepth.

For VCSPs of constant treedepth, \textcite{star_VCSPs} still conjecture that a polynomial ascent always exists.
However, they also produce a binary Boolean VCSP of treedepth $3$ that has an exponential ascent.
This means that my stronger conjecture from the 2024 version of this paper~\cite{td_short} on all ascents being short is false: not all ascents are polynomially long for VCSPs of constant treedepth.

So although the main upper-bound theorem of this paper was false, the two main lower-bound constructions are still useful.
In \cref{sec:all_long}, I give an example of how to build a family of VCSP instances of polylog treedepth where all ascents are superpolynomial.
And in \cref{sec:some_long}, I give an example of how build a family of VCSP instances of loglog treedepth where some ascent is superpolynomial.
This construction has been superseded by \textcite{star_VCSPs}'s VCSP of treedepth 3 with an exponential ascent, but it is useful as an example of a different kind of construction.

\section{Background}

A valued constraint satisfaction problem (VCSP) instance on $n$ variables consists of $n$ domains $D_i$ and a finite set of valued constraints $\mathcal{C} = \{C_S\}$. 
If $|D_i| = 2$ for all $i \in [n]$ then we say that the VCSP is Boolean.
Each valued constraint $C_S$ with \textbf{scope} $S \subseteq [n]$ is a function $C_S: \prod_{i \in S} D_{i} \rightarrow \mathbb{Z}$.
Each \textbf{assignment} $x \in D_1 \times \cdots \times D_n$ has an associated \textbf{fitness} (or \textbf{value}) $f(x)$ given by $f(x) = \sum_{C_S \in \mathcal{C}} C_S(x[S])$ where $x[S]$ is the substring of assignment $x$ with domain indexes in $S$.
The \textbf{fitness landscape} from a VCSP instance is the above fitness function together with a notion of when two assignments are adjacent.

Two assignments $x,y$ are \textbf{adjacent} if there exists a unique $k \in [n]$ such that $x_k \neq y_k$ and $x_i = y_i$ for all other $i \in [n] - \{k\}$.
This unique $k$ is the \textbf{step-index} between $x$ and $y$.
$N(x) \subseteq  D_1 \times \cdots \times D_n$ is the set of all assignments that are adjacent to $x$.
A \textbf{step-sequence} $p = x^0,x^1,\ldots,x^T$ is a sequence of $T + 1$ assignments such that $x^{t + 1} \in N(x^t)$.

A \textbf{(local) solution} to a VCSP $\mathcal{C}$ is an assignment $x^*$ such that for all $y \in N(x^*)$ $f(x^*) \geq f(y)$.
An \textbf{ascent} $p = x^0,x^1,\ldots,x^T$ of length $T$ in a VCSP $\mathcal{C}$ is a step-sequence of $T + 1$ assignments such that $f(x^t) < f(x^{t + 1})$ and $x^T$ is a local solution to $\mathcal{C}$.
An ascent $p$ is \textbf{step-steepest} when for each $t$ if $k$ is the step-index between $x^t$ and $x^{t + 1}$ and $y$ is any string such that $y[[n] - \{k\}] = x[[n] - \{k\}]$ then $f(x^{t + 1}) \geq f(y)$.
In other words, if a step in $p$ changes a variable $x_k$ then it makes the most fitness-increasing change possible to that variable.
Given a partial ordering $\prec$ of $[n]$, a \textbf{$\prec$-ordered ascent} is an ascent $p = x^0,x^1,\ldots,x^T$ with $k^t$ as the step-index between $x^t$ and $x^{t+1}$ such that for all $j \prec k$ and $y \in N(x^t)$, if $j$ is the step-index between $x^t$ to $y$ then $f(y) \leq f(x^t)$.
In other words, each step in $p$ flips a $\prec$-minimal index of those that are able to flip. 

Given a VCSP $\mathcal{C}$, the corresponding \textbf{constraint graph} $G_\mathcal{C} = ([n],E)$ has $\{i,j\} \in E$ if and only if there exists some constraint $C_S \in \mathcal{C}$ such that $\{i,j\} \subseteq S$.
Given a tree $T=([n],P)$ rooted at $r_0$, let $\prec_T$ be the descendant relationship for $T$.
In other words, $r \prec_T s$ if the path from $r$ to $r_0$ in $T$ passes through $s$ and $r \neq s$.
$T$ is a \textbf{treedepth-decomposition} of $G_\mathcal{C}$ if for every $uv \in E$ either $u \prec_T v$ or $v \prec_T u$.
The \textbf{treedepth} $\mathrm{td}(G_\mathcal{C})$ is the minimal $\mathrm{height}(T)$ over all treedepth-decompositions of $G_\mathcal{C}$.
Treedepth is closely related to other sparsity metrics like treewidth~\cite{td_chapter}.
In particular, graphs with bounded treewidth have at most logarithmic treedepth~\cite{BGHK95,td_chapter}.

\section{All ascents are long with polylog treedepth}
\label{sec:all_long}

Polylogarithmic threedepth allows for all ascents to be long.

\begin{proposition}
There exists a Boolean VCSP on $n$ variables of treedepth $d$, such that -- starting from $0^n$ -- every ascent has length at least $\frac{9}{64d}2^{d} \cdot n$.
\label{prop:main_tdd_tight}
\end{proposition}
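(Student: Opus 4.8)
The plan is to prove the lower bound by exhibiting an explicit family of hard instances built from many independent copies of a single \emph{counter gadget}, and then showing that the contributions of the copies add up. Concretely, I would first design a Boolean gadget $G_d$ on $d+1$ variables whose constraint graph has treedepth exactly $d$: a dense ``counter core'' whose chain treedepth-decomposition has height $d$ (so that $\mathrm{td}\le d$, with equality because the core contains a structure of treedepth $d$, e.g.\ a clique $K_d$), together with a controlled auxiliary variable that does not raise the treedepth. The valued constraints of $G_d$ would be chosen, following the binary/Gray-counter constructions used for long steepest ascents in \cite{tw7,pw4}, so that from the all-zeros assignment the fitness-increasing moves drive the gadget through a counter sequence, and so that $G_d$ has a \emph{unique} local optimum.

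The second step, and the genuinely new ingredient relative to the steepest-ascent results, is to prove that \emph{every} ascent in $G_d$ from $0^{d+1}$ --- not merely the step-steepest one --- is long, of length at least $\tfrac{9}{64}\,2^{d+1}$. For this I would introduce a progress potential $\Phi$ on assignments (the value of the simulated counter, i.e.\ the rank along the intended counter path), with $\Phi(0^{d+1})=0$ and $\Phi$ maximized at the unique local optimum with $\Phi_{\max}=\tfrac{9}{64}\,2^{d+1}$. The heart of the argument is a \emph{shortcut-freeness} lemma: for every assignment $x$ and every Hamming-neighbour $y$ with $f(y)>f(x)$ one has $\Phi(y)\le \Phi(x)+1$. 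Given this, any strictly fitness-increasing path from $0^{d+1}$ to the local optimum must make at least $\Phi_{\max}$ steps, which is the claimed bound; here the constant $\tfrac{9}{64}$ records the fraction of the $2^{d+1}$ assignments that the forced counter path visits.

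The main obstacle is exactly this shortcut-freeness lemma. It is comparatively easy to force the \emph{steepest} ascent along a counter, since steepest ascent only ever considers the single most-improving flip; the difficulty is ruling out that some other, non-steepest improving flip lets an adversarial ascent leap many counter-values ahead in one step. To handle this I would verify, flip by flip and case by case over the phases of the counter, that every fitness-improving single-variable change advances the potential by at most one, and that off-path assignments are routed back onto the path without gaining potential. This is where the numerical values of the valued constraints matter and where the constant $\tfrac{9}{64}$ gets pinned down, and I expect this case analysis to be the bulk of the work.

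Finally, I would assemble the instance as $m=\lfloor n/(d+1)\rfloor$ vertex-disjoint copies of $G_d$ (padding with isolated dummy variables when $(d+1)\nmid n$), with fitness equal to the sum of the copies' fitnesses. Since the treedepth of a disjoint union is the maximum over its components, the whole instance still has treedepth $d$. Because the copies share no variables, every single-variable flip lies in exactly one copy and must increase that copy's fitness, so the projection of any global ascent onto a copy is an ascent of that copy from its all-zeros assignment to its (unique) local optimum, and hence has length at least $\tfrac{9}{64}\,2^{d+1}$. Summing the lengths of the projections over the $m$ copies gives a total length of at least $m\cdot \tfrac{9}{64}\,2^{d+1}$, which is $\tfrac{9}{64(d+1)}\,2^{d+1}\cdot n$ up to the rounding in $m$, as required. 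The consistency of this lower bound with the $2^{d+1}\cdot n$ upper bound of Theorem~\ref{thm:main_tdd} is a useful sanity check on the construction.
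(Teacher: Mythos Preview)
Your high-level decomposition --- take $\lfloor n/(d+1)\rfloor$ vertex-disjoint gadgets on $d+1$ variables each, observe that disjoint union preserves treedepth, and sum the per-block lower bounds --- matches the paper exactly. The divergence is entirely in the gadget, and there your proposal has a genuine gap.

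You propose to adapt the Gray-counter constructions from \cite{tw7,pw4} and then discharge the ``every ascent'' requirement via a shortcut-freeness lemma proved by case analysis. But those constructions are engineered so that the \emph{steepest} flip follows the counter; they do \emph{not} in general rule out smaller fitness-increasing flips that jump ahead in counter value, and indeed this is precisely why those papers state their results only for steepest ascent. Your potential argument requires that every improving neighbour raises $\Phi$ by at most one, which is exactly the property those counters lack. You correctly flag this as the crux, but you give no mechanism to obtain it, and the specific constant $\tfrac{9}{64}$ has no natural provenance in a Gray-counter construction (a Gray code visits essentially all of the cube, not a $\tfrac{9}{64}$ fraction of it).

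The paper sidesteps the whole difficulty by choosing a gadget that has shortcut-freeness built into its combinatorial definition: a snake-in-the-box. One puts a single arity-$(d+1)$ constraint on each block whose value at a snake codeword is its position along the snake and zero off the snake. By the defining property of a snake, every interior codeword has exactly two neighbours on the snake (the predecessor and the successor) and all other neighbours are off-snake with value zero; hence from any codeword the \emph{only} fitness-increasing neighbour is the next codeword. Every ascent from the head is therefore forced along the snake, and its length is at least the snake length, for which Abbott and Katchalski's lower bound gives $\tfrac{9}{64}\,2^{d+1}$ --- this is where the constant actually comes from. No potential function, no case analysis, and no step-steepest assumption are needed.
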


This follows directly from existing results on snake-in-the-box codes~\cite{snakeLowerBound}.
A \textbf{snake} is a connected path in the hypercube where each node on the path, with the exception of the head and tail, has exactly two neighbours that are also in the snake. 
The head and the tail each have only one neighbour in the snake. 
The snake can visit a candidate assignment in the hypercube if the candidate node is connected to the current node and it is not a neighbour of any previously visited node in the snake, other than the current node.
\textcite{snakeLowerBound} showed the there exists snakes on $d$-bits with length lower bounded by $\frac{9}{64}2^d$.

\begin{proof}[Proof of Proposition~\ref{prop:main_tdd_tight}]
Consider a valued constraint on a set $S$ of Boolean variables built in the following way.
Set the fitness of all assignments corresponding to snake code words as their position on the snake path.
Set the fitness of assignments to non-code words as zero. 
This single constraint creates a fitness landscape where every ascent from the assignment corresponding to the snake's head is long.
Set $0^S$ as the snake's head, and call the above a \emph{snake constraint} on $S$.

Now divide the $n$ bits into $n/d$ blocks of $d$ bits each.
On each block, put a snake constraint as above.
Based on \textcite{snakeLowerBound}'s snake construction, each block will require at least $\frac{9}{64}2^{d}$ steps to reach its head.
This will be repeated $n/d$ times, giving us the bound in Proposition~\ref{prop:main_tdd_tight}.
\end{proof}

If we let $S$ contain $d = \mathrm{polylog}(n)$ bits then the above VCSP on $n$ bits will implement a fitness landscape will all ascents of non-polynomial length.
Thus, the VCSP will have no polynomial ascents from some initial assignment while having treedepth that is $\mathrm{polylog}(n)$.

\section{Long ascents with loglog treedepth}
\label{sec:some_long}

 It is important to note that the existence of short ascents does not mean that local search algorithms will always find and follow short ascents in sparse VCSPs.
 For instance, \textcite{repCP}'s Example 7.2 is a binary Boolean VCSP on $4n + 1$ variables with a constraint graph of pathwidth-$2$ that has ascents of length greather than $2^n$.
 This example has logarithmic treedepth, and all other prior examples of long ascents that I know of also have at least logarithmic treedepth.
 To improve this state-of-the-art, I will to construct an binary Boolean VCSP on $n$ variables with treedepth $O(\log \log n)$ that has long ascents.

At treedepth one, the constraint graph has no edges and so can only produce a smooth additive landscape.
This can have at most $n$ flips for $n$ bits.

For treedepth two, we have stars.
Consider a star on $2n + 1$ vertices.
I will divide these vertices into $3$ sets, a left set $L = \{1, \ldots, n\}$ of $n$ vertexes, one center vertex $c = n + 1$, and a right set $R = \{n + 2, \ldots, 2n + 1\}$ of $n$ vertexes.
For every edge $cw$ with $w \in R$, add the constraint:
\begin{equation}
C_{\{c,w\}} = \begin{pmatrix}
0 & 1 \\
1 & 0
\end{pmatrix} 
\quad \text{ where } x_c \text{ selects the row and } x_w \text{ selects the column.}
\end{equation} 
For the center variable, add the unary constraint:
\begin{equation}
C_{\{c\}} = \begin{pmatrix}
0 \\
n + 1
\end{pmatrix}
\quad \text{ where } x_c \text{selects the row.}
\end{equation}
For every edge $uc$ with $u \in L$, if $n - u $ is even then add the constraint:
\begin{align}
C_{\{u,c\}} & = \begin{pmatrix}
0 & 0 \\
2n + 2 & 0
\end{pmatrix}
\quad \text{if } n - u \text{ is even, or} \\
C_{\{u,c\}} & = \begin{pmatrix}
0 & 0 \\
0 & 2n + 2
\end{pmatrix}
\quad \text{if } n - u \text{ is odd.}
\end{align}
Where for both constraints $x_u$ selects the row and $x_c$ selects the column.
Finally, create a unary constraint for each $u \in L$ of $
C_{\{u\}} = \begin{pmatrix}
0 \\
1
\end{pmatrix}$ 
where $x_u$ selects the row.

Now consider a $>$-ordered ascent on this VCSP starting from $x = 0^{2n + 1}$.
This ascent will increase fitness by $1$ at each step for a total length of $n^2 + 4n + 1$ flips.

Finally, for higher treedepth $d$, we will just repeat the pattern of the treedepth one construction.
Specifically, we will have $2^{d}$ blocks of $n$ variables with a $2^{d} - 1$ `center' variables, with one center variable after each block of $n$ variables.
Name these center variables from \emph{right} to \emph{left} as $c_1$, ..., $c_m$ for $m = 2^{d} - 1$.
We will connect each center variable to it's preceding and following block of $n$ variables in the same way as the treedepth $1$ example above, except we will define the weights on $c_k$ as follows.
For $c_1$, we will set the weights as in the star construction above. 
Specifically, the non-zero matrix entries of the binary constraints from left to center, unary constraint on center, and binary constraint from center to right will be $2n + 2, n+1, 1$, respectively.
Finally, for the first block, we will define $w_1 = 2n + 3$ (also $w_0 = 1$).
For $c_{k + 1}$, we will build the weights recursively from $w_k$ by setting the non-zero matrix entries as $2nw_k + 2, nw_k + 1, w_k$ and define $w_{k + 1} = 2nw_k + 3$.
As with the star example, the leftmost block of $n$ variables will have the additional unary constraints $\begin{pmatrix} 0 \\ 1 \end{pmatrix}$.

In the resulting VCSP, as with the star case, the $>$-ordered ascent starting from $0^{2^{d}(n + 1) - 1}$ will increase the fitness by $1$ at each step, with a maximum fitness greater than $n^{2^d}$.
Thus, if we pick $d = \log\log n$, we will have a VCSP with $\log\log$ treedepth and an ascent that is quasipolynomially long.

\section{Conclusion}

Overall, the goal of focusing of parameters like treedepth is to find cases where where can seperate short from long ascents.
Cases where short ascents are guaranteed to exist but reasonable algorithms end up on long ascents.
This will transform the general question of tractability into a question of if specific local search algorithms that are used in practice -- or forced on us by nature when we are modeling processes like biological evolution~\cite{KazThesis,evoPLS,repCP} -- end up on one of the short ascents or one of the long ones.
Overall, these results suggest that the study of sparse VCSPs can help us better understand the barriers to efficient local search and teach us something about the structure of locally optimal combinatorial objects.

\section*{Acknowledgements}

I would like to thank Sukanya Pandey for introducing me to treedepth; 
and Dave Cohen, Peter Jeavons, and Melle van Marle for helpful discussions; and Daniel Dadush for pointing out the fatal errors in my 2026 version of this paper.

\printbibliography

\end{document}